\newcommand{\bs}[1]{\boldsymbol{#1}}
\newcommand{\Aut}{\text{Aut}}
\newcommand{\Inn}{\text{Inn}}
\newcommand{\Out}{\text{Out}}
\newcommand{\Ad}{\text{Ad}}
\newcommand{\ad}{\text{ad}}
\newcommand{\TFG}{\text{TFG}}
\newcommand{\MFG}{\text{MFG}}
\newcommand{\SO}[1]{\text{SO}(#1)}
\newcommand{\diag}[1]{\text{diag}(#1)}
\newcommand{\SSS}{\boldsymbol{\mathfrak{S}}}
\newcommand{\WWW}{\boldsymbol{\mathfrak{W}}}
\newtheorem{definition}{Definition}
\newtheorem{remark}{Remark}
\newtheorem{lemma}{Lemma}
\newtheorem{theorem}{Theorem}
\newcommand\bib@setcolor[1]{%
  \ifcsname bib@colored@#1\endcsname
    \expandafter\color\expandafter{\csname bib@colored@#1\endcsname}
  \else
    \normalcolor
  \fi
}
\title{\LARGE \bf
Classification of Linear Observed Systems on Multi-Frame Groups \\via Automorphisms
}
\author{Changwu Liu$^{1}$ and Yuan Shen$^{2}$
\thanks{$^{1}$Changwu Liu is with Department of Electronic Engineering, Tsinghua University, Beijing, China
        {\tt\small liucw\_ee@tsinghua.edu.cn}}%
\thanks{$^{2}$Yuan Shen is with Department of Electronic Engineering, Tsinghua University, Beijing, China
        {\tt\small shenyuan\_ee@tsinghua.edu.cn}}%
}
\begin{document}

\maketitle
\thispagestyle{empty}
\pagestyle{empty}

\begin{abstract}

Many navigation problems can be formulated as observer design on linear observed systems with a two-frame group structure, on which an invariant filter can be implemented with guaranteed consistency and stability. It's still unclear how this could be generalized to simultaneous estimation of the poses of multiple frames and the general forms of the linear observed systems involving multiple frames remain unknown. In this letter, we propose a multi-frame group structure by semi-direct product using the two-frame group as building blocks, covering all natural extensions. More importantly, we give a systematic direct calculation to classify all possible forms of linear observed systems including process ODEs and algebraic observations on such multi-frame group through its automorphism structure, in comparison to the existing classification on two-frame groups relying on ingenious construction. Depth-camera inertial odometry with online extrinsics calibration is provided as an application.    

\end{abstract}

\begin{keywords}
Geometric Methods, Nonlinear Systems, Nonlinear Observers, Navigation.        
\end{keywords}

\section{Introduction}

\PARstart{L}{inear} observed systems, first formulated on Lie groups\cite{LoS}, then generalized to manifolds with connection\cite{LoS_Mfd}, comprise a family of nonlinear systems on non-Euclidean space compatible with some geometric-structure preserving transformations of the state space\cite{LoS_Mfd}. Careful observer design on a linear observed system has promising properties provided the geometry of the problem is respected, and thus those systems are of particular interest to the nonlinear system and observer community. Given a linear observed system on Lie group, an invariant extended Kalman-filter (IEKF) can be implemented with exact error propagation in the process ODE with state-independent Jacobians\cite{InEKF, AnnuRevInEKF, LoS}. Such properties guarantee the uniform boundedness of the error evolution of IEKF, leading to non-trivial local stability results using mere assumptions on the true trajectory compared to traditional EKFs\cite{InEKF}. Equivariant filters (EqF)\cite{EqF,EqFCDC}, designed for systems on homogeneous spaces through an equivariant lift of the system to its Lie symmetry group via input extensions\cite{EqSys}, are also applicable to linear observed systems with guaranteed consistency and local convergence\cite{EqF_SysLieGroup}. Noise models reflecting the stochastic characteristics of sensors can be easily associated to either IEKF or EqF, making it popular for practitioners in robotics. There are a wide range of successful implementations of IEKF or EqF in visual-inertial based multi-sensor navigation, like \cite{EqVIO, MSCEqF, Decoupled-RIEKF, InGVIO,ExploitSymm}, to name a few. Besides observer design based on linearization (e.g. IEKF and EqF), constructive observer design techniques on linear observed systems are eligible for almost-global convergence beyond local results\cite{AutoErrorGroupAffine, ConstructiveInertialNav, SynchronousObsInertial}. The error dynamics of a general linear observed system is state-independent but may depend on control inputs. The key-observation in \cite{AutoErrorGroupAffine, ConstructiveInertialNav} is that the error evolution could be cast into a fully autonomous form through augmentation of the original Lie group state with the automorphism group of its Lie algebra, leading to convergence results in the large. Observer design methods on linear observed systems by linearization or construction are of equal importance, since the former has accuracy advantage in long-term operation and the latter may be used as an initializer for the former.          

Prior to adopting one of the above observer frameworks for linear observed systems, one has to first model the equations obtained from physical laws as linear observed systems. It turns out that this is highly nontrivial in practice. For instance, the navigation problem of a single vehicle, roughly speaking, is the state estimation of attitude, position and velocity between frames. The non-biased IMU kinematics is canonically formulated on a smooth state manifold diffeomorphic to $\text{SO}(3)\times\mathbb{R}^6$. Note that various group structures can be endowed on $\text{SO}(3)\times\mathbb{R}^6$. One has to ingeniously choose the $\text{SE}_2(3)$\cite{InEKF} and later verify the establishment of group affine property to confirm the feasibility of modelling by linear observed systems. This becomes intractable when either the state structure or the system equation is more complicated than the case above, even the existence of such structure making the system linear observed remains unknown. The invention of the two-frame group (TFG) is a pioneer work tackling this issue by A. Barrau and S. Bonnabel\cite{TFG}. In contrast to the above procedure, they attack the modelling problem by first considering a generic group structure covering most navigation applications and construct the general forms of systems on this group with geometric interpretations, i.e. the natural two-frame systems. Though there are emerging applications of employing the TFG art\cite{ActiveSensingTFG, TFGwheel,TFGPseudoMeas}, it's still unclear about the generic group structure of simultaneous estimation of states involving multiple coupled frames. Inspired by \cite{TFG}, we develop a multi-frame group (MFG) structure by semi-direct product using the two-frame group as building blocks, covering all natural extensions. It's of theoretical and practical interest to classify all linear observed systems on MFG, i.e. explicitly showing all possible forms of process ODE and algebraic observation equations. We emphasize on the term `classification': it rules out the possibility of a particular system to fall into the scope of linear observed system under some state group construction as long as the system does not match the form in the classification theorems. Note this is not explicitly proved in \cite{TFG}. Unlike delicate geometric construction\cite{TFG}, a systematic approach is proposed to directly calculate all possible forms of the linear observed systems on MFG including process ODEs and algebraic observations via the automorphism group of MFG. Based on our classification, practitioner is only required to check whether the sensor system equations involving motions of multiple rigid bodies are in a form covered by our theorems. If yes, one might naturally inherit a group structure from MFG. If no, observer techniques on linear observed systems may not be applicable. 

Our major contributions are summarized as:
\begin{itemize}
\item a multi-frame group (MFG) structure is constructed by semi-direct product using the two-frame group as building blocks, covering natural extensions;
\item a systematic approach utilizing the automorphism group is proposed to classify linear observed systems on MFG including process ODEs and algebraic observations;
\item depth-camera inertial odometry with online extrinsics calibration serves as an application of the theory. 
\end{itemize}

\section{Preliminaries}\label{sec::pre}

Mathematical preliminaries regarding Lie groups and algebras used in this letter are briefly introduced. Readers may refer to \cite{IntroSmoothMfd,LieGroupAlgRep} for a comprehensive exposition.

A Lie group $G$ is at the same time a smooth manifold and a group\cite{IntroSmoothMfd}. The push-forward of the left multiplication induces a left translation on the tangent bundle $TG$ of $G$. The left-invariant vector fields $L(G)$ are vector fields on $G$ invariant under the action of left translations. $L(G)$ happens to be a finite-dimensional $\mathbb{R}$-vector space due to the one-to-one correspondence between $L(G)$ and the tangent space $T_{id}G$ at the identity. This identification gives $T_{id}G$ a Lie algebra $\mathfrak{g}$ structure from the differential geometric bracket of $L(G)$. The exponential map $\exp:\mathfrak{g}\rightarrow G$ establishes a diffeomorphism between $\mathfrak{g}$ and an open subset of $G$ containing the identity. Matrix Lie groups form a subset of abstract Lie groups through embedding into some matrix space $\mathbb{C}^{n\times n}$, and their multiplications and left-translations are matrix products. Meanwhile, the Lie algebra are matrices of the same size and the Lie bracket is the matrix commutator. $\exp$ is realized by matrix exponential.

The structure preserving map of a Lie group $G$ is the automorphism $\Psi:G\rightarrow G$, where $\Psi$ is smooth, injective and surjective. Moreover, it preserves group multiplication, i.e. $\Psi(g_1g_2)=\Psi(g_1)\Psi(g_2),\forall g_1,g_2\in G$. All $\Psi$s compose an abstract group under composition, denoted the automorphism group $\Aut(G)$. In most cases, $\Aut(G)$ itself can be given a Lie structure. Every $\Psi$ induces a Lie algebra automorphism, denoted $\psi:\mathfrak{g}\rightarrow\mathfrak{g}$, which is a linear isomorphism preserving the Lie bracket. All $\psi$s again form a Lie group, easily seen to be a subgroup of $\text{GL}(\mathfrak{g})$. $\psi$ corresponds to a unique $\Psi$ as long as $G$ is simply-connected, meaning $\Aut(G)\stackrel{\sim}{=}\Aut(\mathfrak{g})$. Transformations on $G$ in the form of conjugation $\mathcal{C}_{g}:G\rightarrow G,\ h\mapsto ghg^{-1},\ \forall g,h\in G,$ form a normal subgroup of $\Aut(G)$, termed the inner-automorphisms denoted $\Inn(G)$. The elements in the quotient group from $\Out(G)\stackrel{\sim}{=}\Aut(G)/\Inn(G)$ are outer-automorphisms. Note $\mathcal{C}:G\rightarrow\Inn(G),\ g\mapsto\mathcal{C}_g$ can be interpretated as a homomorphism from $G$ to $\Aut(G)$. The derivative of the conjugation at the identity defines the adjoint map $\Ad_g:\mathfrak{g}\rightarrow\mathfrak{g}$ by $\Ad_g:=(d\mathcal{C}_g)_{id}$. $\Ad$s form a Lie subgroup of $\Aut(\mathfrak{g})$, whose Lie algebra $\ad\in\text{End}(\mathfrak{g})$ is the bracket on $\mathfrak{g}$.

Given group $N$ and $Q$, group $G$ is said to be the extension of $N$ by $Q$ if $N$ is a normal subgroup of $G$ and $Q$ is isomorphic to the quotient group $G/N$, meaning $G$ takes value in the product space $Q\times N$. In practice, this could be realized by a semi-direct product, namely by a homomorphism $\varphi:Q\rightarrow\Aut(N)$, and one can construct the group law on $G$. Semi-direct product via $\varphi$ is denoted by $G=Q\ltimes_\varphi N$. Note $\varphi=id_N$ reproduces direct-product.

Bold letters are for vectors and matrices. $\bs{0}$ and $\bs{I}$ denote the zero and identity matrix respectively. 

\section{Construction of Multi-Frame Groups}\label{sec::mfg}

Rotation, as a length and orientation preserving linear transformation between the body frame and the world frame, takes value in $\text{SO}(d)$. In navigation problems on a two-frame setting, one usually estimates the rotation $\text{SO}(d)$ along with some $\mathbb{R}^d$-valued vectors associated with either the body or the world frame\cite{TFG}, rigorously defined by group extension.

\begin{definition}\label{def::type_I_extension}
  Let $G,H$ be two groups. Denote $\phi:G\rightarrow\Aut(H)$ to be a homomorphism from $G$ to the automorphism group of $H$. The set $G\times H^n\times H^m$ ($n,m\in\mathbb{N}$) can be given a group structure by semi-direct product, and the corresponding group multiplication is defined by
  \begin{align*}
    &(g^{(1)},h_1^{(1)},\cdots, h_n^{(1)},h_{n+1}^{(1)},\cdots, h_{n+m}^{(1)})\\
    &\cdot(g^{(2)},h_1^{(2)},\cdots, h_n^{(2)},h_{n+1}^{(2)},\cdots, h_{n+m}^{(2)})\\
    =\ &(g^{(1)}g^{(2)},h_1^{(1)}\phi_{g^{(1)}}(h_1^{(2)}),\cdots, h_n^{(1)}\phi_{g^{(1)}}(h_n^{(2)}),\\
    &\phi^{-1}_{g^{(2)}}(h_{n+1}^{(1)})h_{n+1}^{(2)},\cdots, \phi^{-1}_{g^{(2)}}(h_{n+m}^{(1)})h_{n+m}^{(2)})
  \end{align*}
  where $g^{(\cdot)}\in G$ and $h_{\cdot}^{(\cdot)}\in H$. The new group is termed \textbf{the type-I semi-direct product extended group} of $G,H$ by $\phi$ with parameter $n,m$.
\end{definition}

$\TFG(d,n,m)$ is isomorphic to the type-I semi-direct product extended group of $G=\SO{d}$ and $H=\mathbb{R}^d$ by $\phi$ with parameter $n,m$, where $\phi$ is the canonical action of $\SO{d}$ on $\mathbb{R}^d$. It's straightforward to verify that $\TFG(d,n,m)$ has a matrix embedding in the form of \eqref{eq::tfg_embedding}
\begin{equation}
  \label{eq::tfg_embedding}
  \bs{T}=\begin{bmatrix}
    \bs{R} & \bs{r}\\
    \bs{0} & \bs{I}
    \end{bmatrix}\in\text{GL}(d+n+m,\mathbb{R}),\ 
    \bs{r}=\begin{bmatrix}
      \bs{x} & \bs{Ry}
    \end{bmatrix}
\end{equation}
where the matrix product coincides with the multiplication defined above. Note that $\bs{R}\in\SO{d},\bs{r}\in\mathbb{R}^{d\times(n+m)},\bs{x}\in\mathbb{R}^{d\times n},\bs{y}\in\mathbb{R}^{d\times m}$. Each column of $\bs{x}$ or $\bs{y}$ is the world-frame or body-frame related vector state respectively. To describe non-trivial coupling of multiple frames, we need another type of group extension.

\begin{definition}\label{def::type_II_extension}
  Let $G$ be an abstract group and $\phi$ be a homomorphism from $G$ to the automorphism group of $G$, i.e. $\phi:G\rightarrow\Aut(G)$. The set $G\times(G)^s\times(G)^t$ ($s,t\in\mathbb{N}$) can be given a group structure by semi-direct product and the corresponding group multiplication is defined by
  \begin{align*}
    &\ (g_1^{(1)},g_2^{(1)},\cdots, g_{s+1}^{(1)},g_{s+2}^{(1)},\cdots, g_{s+t+1}^{(1)})\\
    &\ \cdot(g_1^{(2)},g_2^{(2)},\cdots, g_{s+1}^{(2)},g_{s+2}^{(2)},\cdots, g_{s+t+1}^{(2)})\\
    =&\ (g_1^{(1)}g_1^{(2)},
    g_2^{(1)}\phi_{g_1^{(1)}}(g_2^{(2)}),g_3^{(1)}\phi_{g_2^{(1)}}\circ\phi_{g_1^{(1)}}(g_3^{(2)}),\\
    &\ \cdots, g_{s+1}^{(1)}\phi_{g_{s}^{(1)}}\circ\phi_{g_{s-1}^{(1)}}\circ\cdots\circ\phi_{g_1^{(1)}}(g_{s+1}^{(2)}),\\
    &\ \phi^{-1}_{g^{(2)}_1}(g_{s+2}^{(1)})g_{s+2}^{(2)},\phi^{-1}_{g^{(2)}_{s+2}}\circ\phi^{-1}_{g^{(2)}_1}(g_{s+3}^{(1)})g_{s+3}^{(2)},\\
    &\ \cdots, \phi^{-1}_{g^{(2)}_{s+t}}\circ\cdots\circ\phi^{-1}_{g^{(2)}_{s+2}}\circ\phi^{-1}_{g_1^{(2)}}(g_{s+t+1}^{(1)})g_{s+t+1}^{(2)})
  \end{align*}
  where $g_{\cdot}^{(\cdot)}\in G$. This new group is termed \textbf{the type-II semi-direct product extended group} of $G$ by $\phi$ with $s,t$.
\end{definition}

The automorphism group of TFG is in the following form,
\begin{equation*}
  \Aut(\TFG)=\left\{\psi_{\bs{S}}\bigg|\psi_{\bs{S}}(\bs{T})=\bs{S}\bs{T}\bs{S}^{-1},\bs{S}=\begin{bmatrix}
    \bs{\Omega} & \bs{\varrho}\\
    \bs{0} & \bs{A}
  \end{bmatrix}\right\}
\end{equation*}
where $\bs{T}\in\TFG(d,n,m)$, $\bs{\Omega}\in\SO{d}$, $\bs{\varrho}\in\mathbb{R}^{d\times(n+m)}$ and $\bs{A}\in\text{GL}(n+m,\mathbb{R})$. $\bs{S}$ itself composes a matrix Lie group termed $\text{SIM}_{n+m}(d)$\cite{ConstructiveInertialNav}. A canonical choice of the homomorphism $\phi:\TFG\rightarrow\Aut(\TFG)$ is $\phi(\bs{T})=\psi_{\bs{T}}$, i.e. the mapping into $\Inn(\TFG)$. The multi-frame group is thereafter defined.

\begin{definition}
  The \textbf{multi-frame group} (MFG), denoted $\MFG(d,n,m,s,t)$, is the type-II semi-direct product extended group of $\TFG(d,n,m)$ by the above $\phi$ with $s,t\in\mathbb{N}$.
\end{definition}

\begin{theorem}\label{th::mfg_embedding}
  Every element $\bs{\chi}$ in a multi-frame group $\MFG(d,n,m,s,t)$ has a matrix group embedding of block-diagonal matrices with $\TFG(d,n,m)$-valued blocks, as: 
  \begin{gather*}
    \begin{split}
      \bs{\chi}=\text{diag}\bigl(&\bs{T}_0, {^l}\bs{T}_1\bs{T}_0,{^l}\bs{T}_2{^l}\bs{T}_1\bs{T}_0,...,{^l}\bs{T}_s{^l}\bs{T}_{s-1}\cdots\bs{T}_0,\\
      &\bs{T}_0{^r}\bs{T}_1,\bs{T}_0{^r}\bs{T}_1{^r}\bs{T}_2,...,\bs{T}_0\cdots{^r}\bs{T}_{t-1}{^r}\bs{T}_t\bigl)
    \end{split}
  \end{gather*}
  where each ${^{(\cdot)}}\bs{T}_{(\cdot)}$ takes value in TFG.
\end{theorem}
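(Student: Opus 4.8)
The plan is to exhibit the claimed block-diagonal matrix explicitly as the image of an injective homomorphism $\iota$ from $\MFG(d,n,m,s,t)$ into a product of copies of $\TFG(d,n,m)$, and then invoke the matrix embedding \eqref{eq::tfg_embedding} of each TFG factor. First I would fix the correspondence between the abstract coordinates of Definition~\ref{def::type_II_extension} and the symbols in the statement by setting $\bs{T}_0=g_1$, $\ {^l}\bs{T}_k=g_{k+1}$ for $k=1,\dots,s$, and ${^r}\bs{T}_j=g_{s+1+j}$ for $j=1,\dots,t$, so that the $k$-th left block is the partial product $L_k:=g_{k+1}g_k\cdots g_1$ and the $j$-th right block is $R_j:=g_1 g_{s+2}\cdots g_{s+1+j}$. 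Since these are products of TFG elements, each block is again TFG-valued, and placing them on the diagonal yields a matrix via \eqref{eq::tfg_embedding}; the resulting map is manifestly injective because the coordinates are recovered by $g_1=L_0$, $\ g_{k+1}=L_k L_{k-1}^{-1}$ and $g_{s+1+j}=R_{j-1}^{-1}R_j$. It then remains to check that $\iota$ is a homomorphism, i.e. that block-wise matrix multiplication reproduces the type-II group law.

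The decisive simplification comes from the inner-automorphism choice $\phi(\bs{T})=\psi_{\bs{T}}$, under which $\phi_g(h)=ghg^{-1}$ and $\phi_g^{-1}(h)=g^{-1}hg$; consequently a composition of these conjugations collapses to conjugation by the ordered product, e.g. $\phi_{g_k}\circ\cdots\circ\phi_{g_1}(h)=(g_k\cdots g_1)\,h\,(g_k\cdots g_1)^{-1}$, and dually $\phi_{g_{s+j}}^{-1}\circ\cdots\circ\phi_{g_1}^{-1}(h)=(g_1g_{s+2}\cdots g_{s+j})^{-1}\,h\,(g_1g_{s+2}\cdots g_{s+j})$ for the inverse chain on the right. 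Substituting these identities into the group multiplication of Definition~\ref{def::type_II_extension}, the $(k{+}1)$-th left component of a product becomes $g_{k+1}^{(1)}L_{k-1}^{(1)}\,g_{k+1}^{(2)}\,(L_{k-1}^{(1)})^{-1}$, and the $(s{+}1{+}j)$-th right component becomes $(R_{j-1}^{(2)})^{-1}\,g_{s+1+j}^{(1)}\,R_{j-1}^{(2)}\,g_{s+1+j}^{(2)}$, where in each case the conjugating factor is exactly the partial product sitting in the neighbouring block.

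I would then verify $\iota(\bs{\chi}^{(1)})\,\iota(\bs{\chi}^{(2)})=\iota(\bs{\chi}^{(1)}\bs{\chi}^{(2)})$ block by block, by induction on the block index. For the left blocks, assuming the hypothesis $L_{k-1}^{(\mathrm{prod})}=L_{k-1}^{(1)}L_{k-1}^{(2)}$, I multiply the $(k{+}1)$-th product component on the left of it and watch the factor $(L_{k-1}^{(1)})^{-1}L_{k-1}^{(1)}$ telescope, which leaves precisely $L_k^{(1)}L_k^{(2)}$; the right blocks telescope symmetrically, with the pair $R_{j-1}^{(2)}(R_{j-1}^{(2)})^{-1}$ cancelling so that $R_j^{(\mathrm{prod})}=R_j^{(1)}R_j^{(2)}$. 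The base cases $L_0^{(\mathrm{prod})}=g_1^{(1)}g_1^{(2)}$ and $R_0=g_1$ are immediate, and the induction closes, establishing that $\iota$ is a matrix-group embedding onto the stated block-diagonal form.

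I expect the main obstacle to be bookkeeping rather than anything conceptual: keeping the index ranges straight and, above all, respecting the left–right asymmetry baked into Definition~\ref{def::type_II_extension} — the left chain conjugates by partial products of the \emph{first} argument, whereas the right chain conjugates by inverse partial products of the \emph{second} argument. Getting the order of composition of the $\phi$'s, and hence the order of the matrix factors, correct is the only place where an error could plausibly enter; once the conjugation-by-an-ordered-product identity is secured, every cancellation in the telescoping is forced and the block-diagonal embedding of the statement follows.
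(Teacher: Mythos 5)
Your proposal is correct and is essentially the paper's own argument: the paper's proof consists of the single assertion that block-diagonal matrix multiplication coincides with the group law of Definition~\ref{def::type_II_extension}, and your write-up simply carries out that verification in full, using the collapse of composed inner automorphisms $\phi_{g_k}\circ\cdots\circ\phi_{g_1}$ into conjugation by the partial products and the resulting telescoping cancellations. The index bookkeeping and the left--right asymmetry are handled correctly, so no gap remains.
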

\begin{proof}
  Matrix product coincides with Definition~\ref{def::type_II_extension}. 
\end{proof}

Note the upper-left superscript $l$ or $r$ is short for multiplication on the `left' or `right' of the core state $\bs{T}_0$. A simplified notation for a MFG member is \eqref{eq::mfg_simple_notation}.
\begin{equation}
  \label{eq::mfg_simple_notation}
  \bs{\chi}=\left(\bs{T}_0\bigl\vert{^l}\bs{T}_j\cdots\bs{T}_0\bigl\vert\bs{T}_0\cdots{^r}\bs{T}_j \right)
\end{equation}
An intuitive geometric picture of the transformation chain is illustrated in Fig.~\ref{fig::mfg_chain_structure}. It supports the coupling of multiple frames by the inner-automorphism $\phi:\TFG\rightarrow\Aut(\TFG)$ rather than the trivial $\phi$ being identity, because the latter breaks the coupling in the chain and decomposes the estimation problem into several independent systems.  

\begin{figure}[t]
  \centering
  \parbox{3in}{
    \centering
    \includegraphics[scale=0.60]{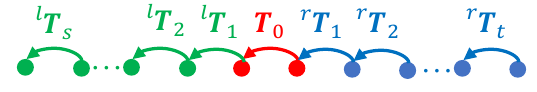}
  }
  \caption{Transformation chain structure of MFG. Each node represents a frame. Each arrow is a TFG-valued transformation between two nodes. All transformations are to be estimated simultaneously.}
  \label{fig::mfg_chain_structure}
\end{figure} 

Before delving into the classification of linear observed systems on MFG, we derive the nonlinear update formula for filters on MFG. The vector-valued error state of MFG is defined in its Lie algebra. Let the exponential of the error state be $\delta\bs{\chi}=\text{diag}\left(\delta\bs{T}_0\vert\delta{^l}T_j\vert{\delta}{^r}\bs{T}_j\right)\in\MFG$. The explicit formula for the MFG exponential is omitted since each of its diagonal blocks is the exponential of TFG\cite{TFG}. Let $\hat{\bs{\chi}}\in\MFG$ be the estimated state and $\bs{\chi}\in\MFG$ be the true state. The update equations for the left-invariant error $\delta\bs{\chi}=\bs{\chi}^{-1}\hat{\bs{\chi}}$ are \eqref{eq::mfg_li_error1}--\eqref{eq::mfg_li_error2} in terms of ${^{(\cdot)}}\bs{T}_{(\cdot)}$. The right-invariant case $\delta\bs{\chi}=\hat{\bs{\chi}}\bs{\chi}^{-1}$ is similar and thus omitted to save space.
\begin{align}
  \label{eq::mfg_li_error1}
  \bs{T}_0&=\hat{\bs{T}}_0\delta\bs{T}_0,\quad {^r}\bs{T}_j=(\delta {^r}\bs{T}_{j-1})^{-1}{^r}\hat{\bs{T}}_j\delta {^r}\bs{T}_j \\
  \label{eq::mfg_li_error2}
  {^l}\bs{T}_j&={^l}\hat{\bs{T}}_j\mathcal{C}_{{^l}\hat{\bs{T}}_{j-1}}\circ\mathcal{C}_{{^l}\hat{\bs{T}}_{j-2}}\cdots\mathcal{C}_{\hat{\bs{T}}_{0}}\left[\delta {^l}\bs{T}_j({^l}\delta\bs{T}_{j-1})^{-1}\right]
\end{align}

The final and most important preparation for the classification is the structure of the automorphism group of MFG.
\begin{theorem}\label{th::mfg_automorphism}
  The automorphism group of MFG is:
  \begin{equation*}
    \Aut(\MFG)=\left\{\psi_{\bs{S}}\bigl|\psi_{\bs{S}}(\bs{\chi})=\bs{S}\bs{\chi}\bs{S}^{-1}\right\}
  \end{equation*}
  where $\bs{S}=\text{diag}(\bs{S}_1,...,\bs{S}_{s+t+1})$ and each diagonal block $\bs{S}_i$ has the structure $\bs{S}_i=\begin{bmatrix}
    \bs{\Omega}_i & \bs{\varrho}_i\\
    \bs{0} & \bs{A}_i
  \end{bmatrix}\in\text{SIM}_{n+m}(d)$ with $\bs{\Omega}_i\in\SO{d}$, $\bs{\varrho}_i\in\mathbb{R}^{d\times(n+m)}$ and $\bs{A}_i\in\text{GL}(n+m,\mathbb{R})$.
\end{theorem}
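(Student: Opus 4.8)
The plan is to prove the asserted set equality through two inclusions: first that every block-diagonal $\psi_{\bs{S}}$ of the stated form is an automorphism (the easy direction, by direct computation), and conversely that every automorphism arises this way (the crux, by a structural analysis of $\MFG$ through its Lie algebra).

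For the inclusion $\supseteq$, I would fix $\bs{S}=\diag(\bs{S}_1,\dots,\bs{S}_{s+t+1})$ with each $\bs{S}_i\in\text{SIM}_{n+m}(d)$ and verify that $\psi_{\bs{S}}$ maps $\MFG$ onto itself. By Theorem~\ref{th::mfg_embedding} each $\bs{\chi}\in\MFG$ is block-diagonal with blocks that are words in the $\TFG$-valued generators $\bs{T}_0$, ${}^l\bs{T}_k$, ${}^r\bs{T}_k$; since $\bs{S}$ is block-diagonal, conjugation acts on block $i$ as $\bs{S}_i(\cdot)\bs{S}_i^{-1}=\psi_{\bs{S}_i}(\cdot)$. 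Each $\psi_{\bs{S}_i}$ belongs to $\Aut(\TFG)$ by the form of $\Aut(\TFG)$ recalled above, hence sends every $\TFG$ factor back into $\TFG$; because $\TFG$ is closed under products, every conjugated block is again a word in $\TFG$ and the block pattern of Theorem~\ref{th::mfg_embedding} persists. Recovering the image coordinates $\bs{T}_0'$, ${}^l\bs{T}_k'$, ${}^r\bs{T}_k'$ proceeds inductively along the chain, each being a product of the adjacent block's data with a $\psi_{\bs{S}_i}$-image and hence lying in $\TFG$. Homomorphy is automatic for a conjugation, and $\psi_{\bs{S}}^{-1}=\psi_{\bs{S}^{-1}}$ has the same shape, so $\psi_{\bs{S}}\in\Aut(\MFG)$.

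For $\subseteq$ I would work at the Lie-algebra level, an automorphism of the connected group being determined by its differential, and decompose $\mathfrak{mfg}=\mathfrak{s}\ltimes\mathfrak{r}$ via its Levi decomposition. The solvable radical $\mathfrak{r}$ collects the block translations ($(n+m)$ copies of $\mathbb{R}^d$ per block) and the Levi factor is $\mathfrak{s}\cong\mathfrak{so}(d)^{\,s+t+1}$, spanned by the block rotations. Since $\mathfrak{r}$ is characteristic, any $\psi\in\Aut(\mathfrak{mfg})$ preserves it and induces an automorphism of $\mathfrak{s}$, which a priori permutes the simple $\mathfrak{so}(d)$-factors and applies an automorphism to each. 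I would then read off the radical part from the mixed brackets $[\mathfrak{s},\mathfrak{r}]$, which reproduce the standard $\SO{d}$-action on each $\mathbb{R}^d$: this pins the translation part of $\psi$ down to the off-diagonal $\bs{\varrho}_i$ together with a $\text{GL}(n+m,\mathbb{R})$-block $\bs{A}_i$, i.e. one $\text{SIM}_{n+m}(d)$ matrix per block. Assembling the $\bs{\Omega}_i$ coming from the factor automorphisms with $\bs{\varrho}_i,\bs{A}_i$ yields a single block-diagonal $\bs{S}$ with $\psi=\Ad_{\bs{S}}$, which integrates to $\psi_{\bs{S}}$.

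The main obstacle is the rigidity step on $\mathfrak{s}$. Because every block carries an isomorphic copy of the standard $\SO{d}$-module, the factors of $\mathfrak{so}(d)^{\,s+t+1}$ are a priori interchangeable, and likewise each factor admits the reflection generating $\Out(\mathfrak{so}(d))$ for even $d$; nothing local to a single block rules these out. To force $\psi$ to be block-diagonal with each $\bs{\Omega}_i\in\SO{d}$, rather than a block-permuted or reflected conjugation, I would have to exploit the full coupled group law along the chain, namely how the shared core $\bs{T}_0$ threads every block, or equivalently restrict to the identity component of $\Aut(\MFG)$, which suppresses the discrete factor-permutations and reflections. Verifying that these coupling constraints leave exactly the block-diagonal $\text{SIM}_{n+m}(d)$ freedom and no more is where the genuine calculation resides.
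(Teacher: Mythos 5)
Your forward inclusion ($\supseteq$) is correct and is, in substance, all that the paper itself offers: its entire proof is the sentence ``direct calculation verifies the theorem,'' which can only mean checking that conjugation by a block-diagonal $\bs{S}$ with $\text{SIM}_{n+m}(d)$ blocks preserves the block pattern of Theorem~\ref{th::mfg_embedding} and is invertible of the same shape. So on that half you and the paper coincide. The reverse inclusion ($\subseteq$) is where you go beyond the paper, and the obstacle you name in your last paragraph is a genuine gap --- moreover, one that cannot be closed by ``exploiting the coupled group law along the chain,'' because no such coupling survives at the level of the abstract group. Theorem~\ref{th::mfg_embedding} sends $(\bs{T}_0,{^l}\bs{T}_1,\dots)$ to $\diag(\bs{T}_0,{^l}\bs{T}_1\bs{T}_0,\dots)$, and since ${^l}\bs{T}_1\bs{T}_0$ ranges over all of $\TFG$ as ${^l}\bs{T}_1$ does, this map is a group isomorphism of $\MFG(d,n,m,s,t)$ onto the plain direct product $\TFG(d,n,m)^{s+t+1}$; the chain structure is only a reparametrization. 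Consequently the permutation of two of the $s+t+1$ identical direct factors is an honest automorphism of $\MFG$ (it is conjugation by a block \emph{permutation} matrix, not a block-diagonal one), and for even $d$ conjugating a single factor by a reflection in $O(d)\setminus\SO{d}$ yields another automorphism outside the stated family. Read literally as a set equality, the theorem therefore fails, and no amount of Levi-decomposition bookkeeping will recover it.

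The repair is the one you half-suggest: restrict to the identity component of $\Aut(\MFG)$. That is also all the paper ever uses --- in Section~\ref{sec::los_mfg} the automorphism enters only through a continuous trajectory $\psi^{u_t}$ with $\psi^{u_0}=id$, which cannot reach the discrete factor permutations or the $\Out(\SO{d})$ reflections. On the identity component your strategy (the translation subalgebra is characteristic, the induced automorphism of the semisimple part must fix each factor and be inner, and equivariance of the mixed brackets pins the translation action to $\bs{\varrho}_i$ plus a $\text{GL}(n+m,\mathbb{R})$ block) does land on exactly one $\text{SIM}_{n+m}(d)$ matrix per block, provided you also flag the degenerate cases where the Levi argument as stated breaks: $\mathfrak{so}(2)$ is abelian (so there is no Levi factor and $\Aut$ is much larger), $\mathfrak{so}(4)$ is not simple, and $d=8$ admits triality; the clean statement needs $d=3$ or a blanket hypothesis $d\ge 3$, $d$ odd. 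Stating the theorem for the identity component, with those hypotheses, would make your argument both correct and a genuine improvement on the paper's one-line proof.
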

\begin{proof}
  Direct calculation verifies the theorem.
\end{proof}

\section{Linear Observed Systems on MFGs}\label{sec::los_mfg}

A linear observed system on a Lie group consists of a continuous-time dynamics whose flow is compatible with the automorphism of the state group and an algebraic observation equation in the form of some group action of the state on some constant vector\cite{LoS,AnnuRevInEKF}.

\subsection{Classification of Group Affine Dynamics on MFGs}

The final goal of the classification is to give explicit formulas of all possible forms of group affine dynamics in terms of $\SO{d}$ and vector-valued states associated to each frame. Let $\Phi^t(\bs{\chi}_0):\mathbb{R}\times\MFG\rightarrow\MFG$ be the flow of a group affine dynamics, i.e. a global one-parameter group of local diffeomorphisms\cite{LoS_Mfd}, then $\Phi^t(\bs{\chi}_0)=\psi^{u_t}(\bs{\chi}_0)\Phi^t(id)$, where $\psi^{u_t}$ is an arbitrary trajectory depending on control inputs $u_t$ in $\Aut(\MFG)$ and $\Phi^t(id)$ is an arbitrary trajectory in MFG emanating from the identity\cite{LoS}. For simplicity of notations, we omit superscript $u_t$ unless we emphasize the dependency of some variables on control inputs. By Theorem~\ref{th::mfg_automorphism}, all group affine dynamics are classified into \eqref{eq::general_gad_mfg}.
\begin{equation}
  \label{eq::general_gad_mfg}
  \Phi^t(\bs{\chi}_0)=\bs{S}\bs{\chi}_0\bs{S}^{-1}\Phi^t(id)
\end{equation}
Note $\bs{S}$ is defined in Theorem~\ref{th::mfg_automorphism}. A key observation is that $\bs{W}:=\bs{S}^{-1}\Phi^t(id)$ has a similar block-diagonal structure $\bs{W}=\diag{\bs{W}_1,...,\bs{W}_{s+t+1}}$ where each $\bs{W}_i$ takes value in $\text{SIM}_{n+m}(d)$. Since $\Phi^t(id)$ is an arbitrary chosen trajectory in MFG, the blocks $\bs{S}_i$ and $\bs{W}_i$ are related as \eqref{eq::relation_sw_blocks}
\begin{equation}
  \label{eq::relation_sw_blocks}
  \bs{S}_i=\begin{bmatrix}
    \bs{\Omega}_i & \bs{\varrho}_i\\
    \bs{0} & \bs{A}_i
  \end{bmatrix},\ \bs{W}_i=\begin{bmatrix}
    \bs{\Psi}_i & \bs{\varpi}_i\\
    \bs{0} & \bs{A}_i^{-1}
  \end{bmatrix}
\end{equation}
where $\bs{\Omega}_i,\bs{\Psi}_i\in\SO{d},\bs{\varrho}_i,\bs{\varpi}_i\in\mathbb{R}^{d\times(n+m)},\bs{A}_i\in\text{GL}(n+m.\mathbb{R})$ can be freely chosen as functions of input $u_t$. The lower-right block of $\bs{S}_i$ and $\bs{W}_i$ has a constraint to be inverse to each other. Taking derivative of \eqref{eq::general_gad_mfg} and letting $\bs{\chi}=\bs{S}\bs{\chi}_0\bs{W}$, we have the ODE \eqref{eq::mfg_gad_ode_total}.
\begin{gather}
  \label{eq::mfg_gad_ode_total}
  \begin{split}
    \dot{\bs{\chi}}&=\left(\dot{\bs{S}}\bs{S}^{-1}\right)\bs{S}\bs{\chi}_0\bs{W}+\bs{S}\bs{\chi}_0\bs{W}\left(\bs{W}^{-1}\dot{\bs{W}}\right)\\
    &=\left(\dot{\bs{S}}\bs{S}^{-1}\right)\bs{\chi}+\bs{\chi}\left(\bs{W}^{-1}\dot{\bs{W}}\right)
  \end{split}
\end{gather}
Let $\bs{\mathfrak{S}}=\dot{\bs{S}}\bs{S}^{-1}$ and $\bs{\mathfrak{W}}=\bs{W}^{-1}\dot{\bs{W}}$, both having diagonal structures with each block taking value in the Lie algebra $\mathfrak{sim}_{n+m}(d)$ of $\text{SIM}_{n+d}(d)$, denoted by \eqref{eq::sim_liealgebra_simple} similar to \eqref{eq::mfg_simple_notation}. We add superscript $l$ for $\SSS_0$ and $r$ for $\WWW_0$ for later convenience.
\begin{equation}
  \label{eq::sim_liealgebra_simple}
  \SSS=({^l}\SSS_0\bigl\vert{^l}\SSS_j\bigl\vert{^r}\SSS_j),\ \WWW=({^r}\WWW_0\bigl\vert{^l}\WWW_j\bigl\vert{^r}\WWW_j)
\end{equation}
Using the simplified notations in \eqref{eq::mfg_simple_notation} and \eqref{eq::sim_liealgebra_simple}, the process ODE writes \eqref{eq::mfg_gad_tfg1}--\eqref{eq::mfg_gad_tfg3} following \eqref{eq::mfg_gad_ode_total} in terms of TFG blocks.
\begin{align}
  \label{eq::mfg_gad_tfg1}
  \dot{\bs{T}}_0&={^l}\SSS_{0}\bs{T}_0+\bs{T}_0{^r}\WWW_{0}\\
  \label{eq::mfg_gad_tfg2}
  \frac{d}{dt}\left[{^l}\bs{T}_j\cdots \bs{T}_0\right]&={^l}\SSS_j{^l}\bs{T}_j\cdots\bs{T}_0+{^l}\bs{T}_j\cdots \bs{T}_0{^l}\WWW_j\\
  \label{eq::mfg_gad_tfg3}
  \frac{d}{dt}\left[\bs{T}_0\cdots {^r}\bs{T}_j\right]&={^r}\SSS_j\bs{T}_0\cdots {^r}\bs{T}_j+\bs{T}_0\cdots {^r}\bs{T}_j{^r}\WWW_j
\end{align}
Note the index $j$ for variables with superscript $l$ and $r$ runs from $1$ to $s$ and $t$ respectively. Simplifying \eqref{eq::mfg_gad_tfg2}--\eqref{eq::mfg_gad_tfg3} yields:
\begin{align}
  \label{eq::mfg_gad_tfg4}
  {^l}\dot{\bs{T}}_j&={^l}\SSS_j{^l}\bs{T}_j+{^l}\bs{T}_j\left(\text{Ad}_{{^l}\bar{\bs{T}}_{j-1}}{^l}\bar{\WWW}_j-{^l}\SSS_{j-1}\right)\\
  \label{eq::mfg_gad_tfg5}
  {^r}\dot{\bs{T}}_j&=\left(\text{Ad}^{-1}_{{^r}\bar{\bs{T}}_{j-1}}{^r}\bar{\SSS}_j-{^r}\WWW_{j-1}\right){^r}\bs{T}_j+{^r}\bs{T}_j{^r}\WWW_j
\end{align}
where we define notations ${^l}\bar{\WWW}_j={^l}\WWW_j-{^l}\WWW_{j-1}$, ${^r}\bar{\SSS}_j={^r}\SSS_j-{^r}\SSS_{j-1}$, ${^l}\bar{\bs{T}}_{j-1}={^l}\bs{T}_{j-1}{^l}\bs{T}_{j-2}\cdots\bs{T}_0$ and ${^r}\bar{\bs{T}}_{j-1}=\bs{T}_0\cdots{^r}\bs{T}_{j-2}{^r}\bs{T}_{j-1}$ in \eqref{eq::mfg_gad_tfg4}--\eqref{eq::mfg_gad_tfg5}\footnote{For index consistency when $j=0$, $\bs{T}_0={^l}\bs{T}_0={^r}\bs{T}_0$, ${^r}\bs{\SSS}_0={^l}\bs{\SSS}_0$ and ${^l}\bs{\WWW}_0={^r}\bs{\WWW}_0$. The same convention applies to their blocks.}. We must always keep in mind that the lower-right $(n+m)\times(n+m)$ block of ${^l}\bar{\WWW}_j$ and ${^l}\SSS_j-{^l}\SSS_{j-1}$ should be the same with a different sign by virtue of constraint \eqref{eq::relation_sw_blocks}, because the associated Lie algebras of two Lie group members which are inverse to each other differ with a minus sign. Similar arguments hold between ${^r}\bar{\SSS}_j$ and ${^r}\WWW_j-{^r}\WWW_{j-1}$. Let the components be \eqref{eq::gad_component_def1}--\eqref{eq::gad_component_def4}.
\begin{align}
  \label{eq::gad_component_def1}
  &{^{(\cdot)}}\bs{T}_j=\begin{bmatrix}
      {^{(\cdot)}}\bs{R}_j & {^{(\cdot)}}\bs{r}_j\\
      \bs{0} & \bs{I}
    \end{bmatrix},\ j\ge 0\\
  \label{eq::gad_component_def2}
  &{^l}\SSS_j=\begin{bmatrix}
      {^l}\bs{\theta}_j^\times & {^l}\bs{\gamma}_j\\
      \bs{0} & {^l}\bs{L}_j
    \end{bmatrix},\ {^r}\WWW_j=\begin{bmatrix}
      {^r}\bs{\omega}_j^\times & {^r}\bs{\rho}_j\\
      \bs{0} & {^r}\bs{L}_j
    \end{bmatrix},\ j\ge 0\\
  \label{eq::gad_component_def3}
  &{^l}\bar{\WWW}_j=\begin{bmatrix}
      {^l}\bs{\omega}_j^\times & {^l}\bs{\rho}_j\\
      \bs{0} & {^l}\bs{L}_{j-1}-{^l}\bs{L}_j
    \end{bmatrix},\ j\ge 1\\
  \label{eq::gad_component_def4}
  &{^r}\bar{\SSS}_j=\begin{bmatrix}
      {^r}\bs{\theta}_j^\times & {^r}\bs{\gamma}_j\\
      \bs{0} & {^r}\bs{L}_{j-1}-{^r}\bs{L}_j
    \end{bmatrix},\ j\ge 1
\end{align}
Note ${^{(\cdot)}}\bs{T}_j\in\TFG(d,n,m)$, the superscript $(\cdot)$ is $l, r$ or nothing. ${^l}\SSS_j,{^r}\WWW_j,{^l}\bar{\WWW}_j,{^r}\bar{\SSS}_j$ are in the Lie algebra $\mathfrak{sim}_{n+m}(d)$. ${^{(\cdot)}}\bs{R}_j\in\SO{d}$ and ${^{(\cdot)}}\bs{r}_j,{^l}\bs{\gamma}_j,{^r}\bs{\gamma}_j,{^l}\bs{\rho}_j,{^r}\bs{\rho}_j\in\mathbb{R}^{d\times(n+m)}$. Moreover, ${^l}\bs{\theta}_j,{^r}\bs{\theta}_j,{^l}\bs{\omega}_j,{^r}\bs{\omega}_j\in\mathbb{R}^{\frac{d(d-1)}{2}}$ and $(\cdot)^\times$ embeds an $\mathbb{R}^{\frac{d(d-1)}{2}}$ into the skew-symmetric matrix $\mathbb{R}^{d\times d}$. ${^{(\cdot)}}\bs{L}_j$ are arbitrary $\mathbb{R}^{(n+m)\times(n+m)}$ matrices as the Lie algebra $\mathfrak{gl}(n+m,\mathbb{R})$ of $\text{GL}(n+m,\mathbb{R})$, and specifically we have ${^l}\bs{L}_0=-{^r}\bs{L}_0$. Most importantly, \eqref{eq::gad_component_def2}--\eqref{eq::gad_component_def4} have already included the constraint \eqref{eq::relation_sw_blocks}, and thus every component with distinct notations (except ${^l}\bs{L}_0=-{^r}\bs{L}_0$) in the Lie algebra $\mathfrak{sim}_{n+m}(d)$ is a freely-chosen function of the control input $u_t$. Substituting \eqref{eq::gad_component_def1}--\eqref{eq::gad_component_def4} into \eqref{eq::mfg_gad_tfg1}, \eqref{eq::mfg_gad_tfg4}, \eqref{eq::mfg_gad_tfg5}, and equating blocks by brute calculation, we obtain (and thus have proved) the main theorem regarding the general form of group affine dynamics on a multi-frame group $\MFG(d,n,m,s,t)$.

\begin{theorem}\label{th::los_classification_dynamics}
  With the simplified notation for successive multiplication defined by \eqref{eq::successive_multiplication} and ${^{(\cdot)}}\bar{\bs{L}}_j:={^{(\cdot)}}\bs{L}_{j-1}-{^{(\cdot)}}\bs{L}_{j}$ where the superscript $(\cdot)$ is $l, r$ or nothing,
  \begin{equation}
    \label{eq::successive_multiplication}
    {^{(\cdot)}}\bar{\bs{R}}_{i,j}=\begin{cases}
    {^{(\cdot)}}\bs{R}_{i}{^{(\cdot)}}\bs{R}_{i+1}\cdots{^{(\cdot)}}\bs{R}_{j-1}{^{(\cdot)}}\bs{R}_{j}, & i < j\\
    {^{(\cdot)}}\bs{R}_{i}, & i = j\\
    {^{(\cdot)}}\bs{R}_{i}{^{(\cdot)}}\bs{R}_{i-1}\cdots{^{(\cdot)}}\bs{R}_{j+1}{^{(\cdot)}}\bs{R}_{j}, & i > j
    \end{cases}
  \end{equation}
  then all group affine dynamics on a multi-frame group are classified into the frame-related ODEs \eqref{eq::gad_mfg_frame1}--\eqref{eq::gad_mfg_frame3}:
  \begin{align}
    \label{eq::gad_mfg_frame1}
    \dot{\bs{R}}_0&={^l}\bs{\theta}_0^\times\bs{R}_0+\bs{R}_0{^r}\bs{\omega}_0^\times\\
    \label{eq::gad_mfg_frame2}
    {^l}\dot{\bs{R}}_j&={^l}\bs{\theta}_j^\times{^l}\bs{R}_j+{^l}\bs{R}_j\left[\text{Ad}_{{^l}\bar{\bs{R}}_{j-1,0}}({^l}\bs{\omega}_j^\times)-{^l}\bs{\theta}_{j-1}^\times\right]\\
    \label{eq::gad_mfg_frame3}
    {^r}\dot{\bs{R}}_j&=\left[\text{Ad}^{-1}_{{^r}\bar{\bs{R}}_{0,j-1}}({^r}\bs{\theta}_j^\times)-{^r}\bs{\omega}_{j-1}^\times\right]{^r}\bs{R}_j+{^r}\bs{R}_j{^r}\bs{\omega}_j^\times
  \end{align}
  and vector-related ODEs \eqref{eq::gad_mfg_vec1}--\eqref{eq::gad_mfg_vec3}:
  \begin{align}
    \label{eq::gad_mfg_vec1}
    &\dot{\bs{r}}_0={^l}\bs{\theta}_0^\times\bs{r}_0+{^l}\bs{\gamma}_0+\bs{R}_0{^r}\bs{\rho}_0+\bs{r}_0{^r}\bs{L}_0\\
    \notag
    &{^l}\dot{\bs{r}}_j=\sum\limits_{k=0}^{j-1}{^l}\bar{\bs{R}}_{j,k+1}\left[{^l}\bs{r}_k{^l}\bar{\bs{L}}_{j}-\text{Ad}_{{^l}\bar{\bs{R}}_{k,0}}({^l}\bs{\omega}_j^\times){^l}\bs{r}_k\right]+{^l}\bs{\gamma}_j\\
    \label{eq::gad_mfg_vec2}
    &\ +{^l}\bar{\bs{R}}_{j,0}{^l}\bs{\rho}_j+{^l}\bs{\theta}_j^\times{^l}\bs{r}_j-{^l}\bs{R}_j{^l}\bs{\gamma}_{j-1}-{^l}\bs{r}_j{^l}\bs{L}_{j-1}\\
    \notag
    &{^r}\dot{\bs{r}}_j={^r}\bar{\bs{R}}_{0,j-1}^{-1}{^r}\bs{\theta}_j^\times\bs{r}_0+\sum\limits_{k=1}^{j-1}{^r}\bar{\bs{R}}_{k,j-1}^{-1}\text{Ad}^{-1}_{{^r}\bar{\bs{R}}_{0,k-1}}({^r}\bs{\theta}_j^\times){^r}\bs{r}_k\\
    \notag
    &\ +\text{Ad}^{-1}_{{^r}\bar{\bs{R}}_{0,j-1}}({^r}\bs{\theta}_j^\times){^r}\bs{r}_j-{^r}\bs{\rho}_{j-1}-\sum\limits_{k=0}^{j-1}{^r}\bar{\bs{R}}_{k,j-1}^{-1}{^r}\bs{r}_k{^r}\bar{\bs{L}}_{j}\\
    \label{eq::gad_mfg_vec3}
    &\ +{^r}\bs{R}_j{^r}\bs{\rho}_j+{^r}\bs{r}_j{^r}\bs{L}_j-{^r}\bs{\omega}_{j-1}^\times{^r}\bs{r}_j+{^r}\bar{\bs{R}}_{0,j-1}^{-1}{^r}\bs{\gamma}_j
  \end{align}
  where ${^{(\cdot)}}\bs{R}_{(\cdot)},{^{(\cdot)}}\bs{r}_{(\cdot)}$ are MFG state components and all other variables are arbitrary chosen functions of the control input $u_t$ as described above. The sizes of the matrix of those variables have been defined in preceding paragraphs.
\end{theorem}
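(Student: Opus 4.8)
\section*{Proof proposal}

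The plan is to read the theorem off the three consolidated block ODEs \eqref{eq::mfg_gad_tfg1}, \eqref{eq::mfg_gad_tfg4}, \eqref{eq::mfg_gad_tfg5}, each of which is a matrix identity between $(d+n+m)\times(d+n+m)$ block-upper-triangular matrices. After substituting the component decompositions \eqref{eq::gad_component_def1}--\eqref{eq::gad_component_def4}, I would compare the three non-trivial blocks on the two sides: the top-left $\SO{d}$ block yields the frame-related ODEs, the top-right $\mathbb{R}^{d\times(n+m)}$ block yields the vector-related ODEs, the bottom-left block is identically $\bs{0}$ on both sides, and the bottom-right block must vanish because the lower-right corner of every TFG factor is $\bs{I}$. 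This last vanishing is not an extra equation but precisely the cancellation forced by the sign constraint \eqref{eq::relation_sw_blocks} that is already built into \eqref{eq::gad_component_def2}--\eqref{eq::gad_component_def4}, so it only serves as a consistency check. Two elementary facts drive every computation: the adjoint of a TFG element on $\mathfrak{sim}_{n+m}(d)$, obtained by direct block conjugation so that the top-left corner of $\Ad_{\bs{T}}(\bs{X})$ is $(\bs{R}\bs{\omega})^\times$ via $\bs{R}\bs{\omega}^\times\bs{R}^{-1}=(\bs{R}\bs{\omega})^\times$ for $\bs{R}\in\SO{d}$ while its top-right corner mixes $\bs{\rho}$, $\bs{L}$ and the translation of $\bs{T}$; and the expansion of a successive product of TFG factors, whose rotation part is the ordered product abbreviated in \eqref{eq::successive_multiplication} and whose translation part accumulates as a sum of the individual translations premultiplied by the intervening rotation products.

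First I would dispatch the core block: substituting \eqref{eq::gad_component_def1}--\eqref{eq::gad_component_def2} into \eqref{eq::mfg_gad_tfg1} and reading the two upper blocks gives \eqref{eq::gad_mfg_frame1} and \eqref{eq::gad_mfg_vec1} directly, with no product to expand. For the left chain I would substitute into \eqref{eq::mfg_gad_tfg4}, first expanding the correction factor $\Ad_{{^l}\bar{\bs{T}}_{j-1}}({^l}\bar{\WWW}_j)-{^l}\SSS_{j-1}$ with the TFG-adjoint formula applied to $\bs{T}={^l}\bar{\bs{T}}_{j-1}$, whose rotation part is ${^l}\bar{\bs{R}}_{j-1,0}$ and whose translation part is $\sum_{k=0}^{j-1}{^l}\bar{\bs{R}}_{j-1,k+1}\,{^l}\bs{r}_k$. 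Its top-left corner is $({^l}\bar{\bs{R}}_{j-1,0}{^l}\bs{\omega}_j)^\times-{^l}\bs{\theta}_{j-1}^\times$; left-multiplying by ${^l}\bs{R}_j$ and adding the ${^l}\SSS_j{^l}\bs{T}_j$ contribution ${^l}\bs{\theta}_j^\times{^l}\bs{R}_j$ yields \eqref{eq::gad_mfg_frame2} at once.

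The main obstacle is the vector equation \eqref{eq::gad_mfg_vec2} and its right-chain analogue \eqref{eq::gad_mfg_vec3}, where the translation sum sitting inside the adjoint must be reorganized into the summand displayed in the theorem. The key manipulation is the split-product identity ${^l}\bar{\bs{R}}_{j-1,0}={^l}\bar{\bs{R}}_{j-1,k+1}\,{^l}\bar{\bs{R}}_{k,0}$ together with the conjugation identity, which let me slide the leading rotation through the skew factor as $({^l}\bar{\bs{R}}_{j-1,0}{^l}\bs{\omega}_j)^\times\,{^l}\bar{\bs{R}}_{j-1,k+1}={^l}\bar{\bs{R}}_{j-1,k+1}\,({^l}\bar{\bs{R}}_{k,0}{^l}\bs{\omega}_j)^\times$; after left-multiplication by ${^l}\bs{R}_j$ every ${^l}\bar{\bs{R}}_{j-1,\cdot}$ is promoted to ${^l}\bar{\bs{R}}_{j,\cdot}$ and the summand ${^l}\bar{\bs{R}}_{j,k+1}[{^l}\bs{r}_k{^l}\bar{\bs{L}}_j-\Ad_{{^l}\bar{\bs{R}}_{k,0}}({^l}\bs{\omega}_j^\times){^l}\bs{r}_k]$ emerges, the ${^l}\bar{\bs{L}}_j$ piece coming from the $\bs{L}$-block of ${^l}\bar{\WWW}_j$ acting on the accumulated translation. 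The remaining non-summed terms ${^l}\bs{\gamma}_j$, ${^l}\bar{\bs{R}}_{j,0}{^l}\bs{\rho}_j$, ${^l}\bs{\theta}_j^\times{^l}\bs{r}_j$, $-{^l}\bs{R}_j{^l}\bs{\gamma}_{j-1}$ and the ${^l}\bs{r}_j{^l}\bs{L}$ contribution from the lower-right corner drop out directly. I expect essentially all the effort to live in this index bookkeeping and in keeping the $\bs{L}$-differences straight, rather than in any conceptual step.

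Finally, the right chain \eqref{eq::gad_mfg_frame3}, \eqref{eq::gad_mfg_vec3} follows by the mirror computation on \eqref{eq::mfg_gad_tfg5}: the adjoint is replaced by $\Ad^{-1}$, so $\bs{R}^{-1}\bs{\theta}^\times\bs{R}=(\bs{R}^{-1}\bs{\theta})^\times$ is used in place of the forward conjugation, the product ${^r}\bar{\bs{T}}_{j-1}=\bs{T}_0\cdots{^r}\bs{T}_{j-1}$ is expanded with the oppositely-ordered rotation product ${^r}\bar{\bs{R}}_{0,j-1}$, and the correction factor now multiplies ${^r}\bs{T}_j$ on the left. The same split-product and conjugation identities, applied to inverse rotations, collect the two sums $\sum_{k=1}^{j-1}$ and $\sum_{k=0}^{j-1}$ of \eqref{eq::gad_mfg_vec3}. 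Throughout I would respect the $j=0$ conventions of the footnote and the relation ${^l}\bs{L}_0=-{^r}\bs{L}_0$, so that both chains join consistently at the core block $\bs{T}_0$.
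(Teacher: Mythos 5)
Your proposal follows the paper's own route exactly: the paper's ``proof'' of this theorem is precisely the substitution of \eqref{eq::gad_component_def1}--\eqref{eq::gad_component_def4} into \eqref{eq::mfg_gad_tfg1}, \eqref{eq::mfg_gad_tfg4}, \eqref{eq::mfg_gad_tfg5} followed by blockwise equating, and the ingredients you isolate (the TFG adjoint in block form, the accumulated-translation expansion of the chain product, the split-product identity ${^l}\bar{\bs{R}}_{j-1,0}={^l}\bar{\bs{R}}_{j-1,k+1}{^l}\bar{\bs{R}}_{k,0}$, and the vanishing of the lower-right block as the built-in consistency check) are exactly what that brute-force calculation requires. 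The one small caveat is that $\bs{R}\bs{\omega}^\times\bs{R}^{-1}=(\bs{R}\bs{\omega})^\times$ holds only for $d=3$; for general $d$ you should keep the conjugate written as $\Ad_{\bs{R}}(\bs{\omega}^\times)$, as the theorem statement does, which leaves all of your index bookkeeping unchanged.
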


\begin{remark}\label{rm::gad_dynamics}
  Every ${^{(\cdot)}}\bs{r}_j$ is composed of two types of vectors associated to one of the two frames that a TFG transformation connects. We could further break the following variables into column blocks: ${^{(\cdot)}}\bs{r}_j=\left[{^{(\cdot)}}\bs{x}_j,\ {^{(\cdot)}}\bs{R}_j{^{(\cdot)}}\bs{y}_j\right]$, ${^{(\cdot)}}\bs{\gamma}=\left[{^{(\cdot)}}\bs{\alpha}_j,\ {^{(\cdot)}}\bs{\beta}_j\right]$, ${^{(\cdot)}}\bs{\rho}_j=\left[{^{(\cdot)}}\bs{\xi}_j,\ {^{(\cdot)}}\bs{\eta}_j\right]$ and ${^{(\cdot)}}\bs{L}_j=\begin{bmatrix}{^{(\cdot)}}\bs{L}_{A,j} & {^{(\cdot)}}\bs{L}_{B,j}\\ {^{(\cdot)}}\bs{L}_{C,j} & {^{(\cdot)}}\bs{L}_{D,j}\end{bmatrix}$, where we have ${^{(\cdot)}}\bs{x}_j,{^{(\cdot)}}\bs{\alpha}_j,{^{(\cdot)}}\bs{\xi}_j\in\mathbb{R}^{d\times n}$, ${^{(\cdot)}}\bs{y}_j,{^{(\cdot)}}\bs{\beta}_j,{^{(\cdot)}}\bs{\eta}_j\in\mathbb{R}^{d\times m}$. Besides, ${^{(\cdot)}}\bs{L}_{A,j}\in\mathbb{R}^{n\times n}$, ${^{(\cdot)}}\bs{L}_{B,j}\in\mathbb{R}^{n\times m}$, ${^{(\cdot)}}\bs{L}_{C,j}\in\mathbb{R}^{m\times n}$, ${^{(\cdot)}}\bs{L}_{D,j}\in\mathbb{R}^{m\times m}$, and $(\cdot)$ is $l, r$ or nothing. Substituting the above into \eqref{eq::gad_mfg_vec1}--\eqref{eq::gad_mfg_vec3} we will obtain equations in terms of ${^{(\cdot)}}\bs{x}_j$ and ${^{(\cdot)}}\bs{y}_j$. Due to space limitations, we only show results related to $j=0$.
  \begin{align}
    \label{eq::recover_tfg_gad1}
    \dot{\bs{x}}_0&={^l}\bs{\theta}_0^\times\bs{x}_0+{^l}\bs{\alpha}_0+\bs{R}_0\left[{^r}\bs{\xi}_0+\bs{y}_0{^r}\bs{L}_{C,0}\right]+\bs{x}_0{^r}\bs{L}_{A,0}\\
    \label{eq::recover_tfg_gad2}
    \dot{\bs{y}}_0&=-{^r}\bs{\omega}_{0}^\times\bs{y}_0+\bs{R}_0^{-1}\left[{^l}\bs{\beta}_{0}+\bs{x}_0{^r}\bs{L}_{B,0}\right]+{^r}\bs{\eta}_{0}+\bs{y}_0{^r}\bs{L}_{D,0}
  \end{align}
  Note that \eqref{eq::recover_tfg_gad1}--\eqref{eq::recover_tfg_gad2} exactly recover the `natural vector dynamics' in \cite{TFG} as a complete classification of all possible vector dynamics being group affine under TFG structure. 
\end{remark}

\subsection{Classification of Algebraic Observations on MFGs}

The algebraic observation can be classified by exploring how MFG could act on a vector space. As MFG has been embedded into a block-diagonal matrix, the natural left action on a constant vector is the matrix multiplication. Consider $\tilde{\bs{\chi}}\in\TFG$ to be one of the diagonal blocks of MFG, i.e. one of $\bs{T}_0$, $[{^l}\bs{T}_j\cdots\bs{T}_0]$ or $[\bs{T}_0\cdots{^r}\bs{T}_j]$, and let two known vectors be $\bar{\bs{d}},\underline{\bs{d}}\in V:=\mathbb{R}^{m+n+d}$. Denote $\bar{\bs{w}},\underline{\bs{w}}\in V$ to be the vector-valued measurements respectively. A linear observed measurement equation writes \eqref{eq::obs_total}.
\begin{equation}
  \label{eq::obs_total}
  \bar{\bs{w}}=\tilde{\bs{\chi}}\bar{\bs{d}},\quad\underline{\bs{w}}=\tilde{\bs{\chi}}^{-1}\underline{\bs{d}}  
\end{equation}
Matrix multiplication of $\tilde{\bs{\chi}}$ or $\tilde{\bs{\chi}}^{-1}$ corresponds to the left or right action, compatible with the left- or right-invariant error respectively\cite{InEKF}. Other actions on $\bar{\bs{d}},\underline{\bs{d}}$ can be generated by applying the automorphism on $\tilde{\bs{\chi}}$, as Lemma~\ref{lm::group_action}.

\begin{lemma}\label{lm::group_action}
  Let $M$ be a manifold and $G$ be a group acting on $M$. Let the left action be $\blacktriangleright:G\times M\rightarrow M,\ (g,p)\mapsto g\blacktriangleright p$. Let $\phi\in\Aut(G)$, and we get a new left action from the old one as $\rhd:G\times M\rightarrow M,\ (g,p)\mapsto \phi(g)\blacktriangleright p$.
\end{lemma}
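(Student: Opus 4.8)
The plan is to verify directly that $\rhd$ satisfies the two defining axioms of a left group action: preservation of the identity and compatibility with the group multiplication. First I would recall that a map $\rhd:G\times M\rightarrow M$ is a left action precisely when $e\rhd p=p$ for the group identity $e$ and all $p\in M$, and when $(g_1g_2)\rhd p=g_1\rhd(g_2\rhd p)$ for all $g_1,g_2\in G$ and $p\in M$. Since $\rhd$ is defined by pushing $g$ through $\phi$ before applying the given action $\blacktriangleright$, each axiom should reduce to the corresponding axiom for $\blacktriangleright$ together with a structural property of $\phi$.

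For the identity axiom I would use the elementary fact that any group homomorphism, and in particular the automorphism $\phi$, carries the identity to the identity, i.e. $\phi(e)=e$. Then $e\rhd p=\phi(e)\blacktriangleright p=e\blacktriangleright p=p$, where the last equality is the identity axiom of the original action $\blacktriangleright$.

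For compatibility I would expand the definition of $\rhd$ and invoke the homomorphism property $\phi(g_1g_2)=\phi(g_1)\phi(g_2)$ followed by the compatibility axiom of $\blacktriangleright$, obtaining $(g_1g_2)\rhd p=\phi(g_1g_2)\blacktriangleright p=(\phi(g_1)\phi(g_2))\blacktriangleright p=\phi(g_1)\blacktriangleright(\phi(g_2)\blacktriangleright p)=g_1\rhd(g_2\rhd p)$, which closes the verification of the two axioms.

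There is no genuine obstacle here; the only point worth flagging is that the algebraic action axioms require of $\phi$ only the homomorphism property, the identity preservation being an automatic consequence, so the full strength of $\phi$ being an automorphism is not needed merely for $\rhd$ to be a left action. Bijectivity of $\phi$ becomes relevant only if one further wishes $\rhd$ to inherit faithfulness or transitivity from $\blacktriangleright$. In the smooth setting of the paper I would additionally remark that smoothness of $\rhd$ is immediate, since it is the composition $(g,p)\mapsto(\phi(g),p)\mapsto\phi(g)\blacktriangleright p$ of the smooth automorphism $\phi$ with the smooth action $\blacktriangleright$.
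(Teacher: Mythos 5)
Your proposal is correct and follows essentially the same route as the paper's proof: both verify the identity axiom via $\phi(id)=id$ and the compatibility axiom via the homomorphism property of $\phi$ combined with the action axioms of $\blacktriangleright$. The additional observations (that only the homomorphism property is needed and that smoothness is inherited by composition) are accurate but not required.
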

\begin{proof}
  Let $p\in M$ and $g_1,g_2\in G$. First, $id\rhd p=\phi(id)\blacktriangleright p=id\blacktriangleright p=p$. Moreover, $g_2\rhd(g_1\rhd p)=\phi(g_2)\blacktriangleright(\phi(g_1)\blacktriangleright p)=(\phi(g_2)\phi(g_1))\blacktriangleright p=\phi(g_2g_1)\blacktriangleright p=(g_2g_1)\rhd p$, proving by definition of the group action.
\end{proof}

The right action case is treated accordingly. Let the state $\tilde{\bs{\chi}}$ be $\begin{bmatrix}\tilde{\bs{R}} & \tilde{\bs{r}}\\ \bs{0} & \bs{I}\end{bmatrix}$. Combining \eqref{eq::obs_total} and Lemma~\ref{lm::group_action} yield \eqref{eq::obs_matrix_form1} and \eqref{eq::obs_matrix_form2}, partitioning $\bar{\bs{w}},\underline{\bs{w}}$ and $\bar{\bs{d}},\underline{\bs{d}}$ into row blocks $\bar{\bs{w}}^{(1)},\underline{\bs{w}}^{(1)},\bar{\bs{d}}^{(1)},\underline{\bs{d}}^{(1)}\in\mathbb{R}^d$, $\bar{\bs{w}}^{(2)},\underline{\bs{w}}^{(2)},\bar{\bs{d}}^{(2)},\underline{\bs{d}}^{(2)}\in\mathbb{R}^{n+m}$.
\begin{align}
  \label{eq::obs_matrix_form1}
  \begin{bmatrix}\bar{\bs{w}}^{(1)} \\ \bar{\bs{w}}^{(2)}\end{bmatrix}&=\begin{bmatrix}\bar{\bs{\Omega}} & \bar{\bs{\varrho}} \\ \bs{0} & \bar{\bs{A}}\end{bmatrix}\begin{bmatrix}
    \tilde{\bs{R}} & \tilde{\bs{r}}\\
    \bs{0} & \bs{I}
  \end{bmatrix}\begin{bmatrix}\bar{\bs{\Omega}} & \bar{\bs{\varrho}}\\ \bs{0} & \bar{\bs{A}}\end{bmatrix}^{-1}\begin{bmatrix}\bar{\bs{d}}^{(1)} \\ \bar{\bs{d}}^{(2)}\end{bmatrix}\\
  \label{eq::obs_matrix_form2}
  \begin{bmatrix}\underline{\bs{w}}^{(1)} \\ \underline{\bs{w}}^{(2)}\end{bmatrix}&=\begin{bmatrix}\underline{\bs{\Omega}} & \underline{\bs{\varrho}}\\ \bs{0} & \underline{\bs{A}}\end{bmatrix}\begin{bmatrix}
      \tilde{\bs{R}} & \tilde{\bs{r}}\\
      \bs{0} & \bs{I}
  \end{bmatrix}^{-1}\begin{bmatrix}\underline{\bs{\Omega}} & \underline{\bs{\varrho}}\\ \bs{0} & \underline{\bs{A}}\end{bmatrix}^{-1}\begin{bmatrix}\underline{\bs{d}}^{(1)} \\ \underline{\bs{d}}^{(2)}\end{bmatrix}
\end{align}
The $\bar{\bs{\Omega}},\underline{\bs{\Omega}}\in\SO{d}$, $\bar{\bs{\varrho}},\underline{\bs{\varrho}}\in\mathbb{R}^{d\times(n+m)}$ and $\bar{\bs{A}},\underline{\bs{A}}\in\text{GL}(n+m,\mathbb{R})$ are fixed constants in their respective spaces. We are only interested in the first row block of \eqref{eq::obs_matrix_form1} or \eqref{eq::obs_matrix_form2}, similar to the case where we only take the first three rows when considering $\text{SE}(3)$ acting on $\mathbb{R}^3$ by matrix multiplication of $\mathbb{R}^{4\times 4}$ on a homogeneous vector $\mathbb{R}^4$. Letting $\tilde{\bs{\chi}}$ be one of $\bs{T}_0$, $[{^l}\bs{T}_j\cdots\bs{T}_0]$ or $[\bs{T}_0\cdots{^r}\bs{T}_j]$ and simplifying the equations obtained by equating the first row blocks of \eqref{eq::obs_matrix_form1} and \eqref{eq::obs_matrix_form2}, we have proved the following theorem.

\begin{theorem}\label{th::los_classification_obs}
  Inherit simplified notations from \eqref{eq::successive_multiplication}, and let $\bar{\bs{w}}_0,{^{(\cdot)}}\bar{\bs{w}}_j,\underline{\bs{w}}_0,{^{(\cdot)}}\underline{\bs{w}}_j$ be vector observations in $\mathbb{R}^d$. Let $\bar{\bs{d}}^{(1)}_0,{^{(\cdot)}}\bar{\bs{d}}^{(1)}_j$ be constant $\mathbb{R}^d$. Let $\bar{\bs{d}}^{(2)}_0,{^{(\cdot)}}\bar{\bs{d}}^{(2)}_j$ be constant $\mathbb{R}^{n+m}$. Let $\bar{\bs{\Omega}}_0,{^{(\cdot)}}\bar{\bs{\Omega}}_j,\underline{\bs{\Omega}}_0,{^{(\cdot)}}\underline{\bs{\Omega}}_j$ be constant $\SO{d}$. Let $\bar{\bs{\varrho}}_0,{^{(\cdot)}}\bar{\bs{\varrho}}_j,\underline{\bs{\varrho}}_0,{^{(\cdot)}}\underline{\bs{\varrho}}_j$ be constant $\mathbb{R}^{d\times(n+m)}$. Let the superscript $(\cdot)$ be $l$ or $r$. Those constants are arbitrarily set in their respective domains. Notations for MFG components ${^{(\cdot)}}\bs{R}_{(\cdot)},{^{(\cdot)}}\bs{r}_{(\cdot)}$ follow Theorem~\ref{th::los_classification_dynamics}. Moreover, introduce newly-defined notations ${^l}\bar{\bs{r}}_j={^l}\bs{r}_j+\sum_{k=0}^{j-1}{^l}\bar{\bs{R}}_{j,k+1}{^l}\bs{r}_k$, ${^r}\bar{\bs{r}}_j=\bs{r}_0+\sum_{k=1}^j{^r}\bar{\bs{R}}_{0,k-1}{^r}\bs{r}_k$, ${^l}\underline{\bs{r}}_j=\sum_{k=0}^{j}{^l}\bar{\bs{R}}_{k,0}^{-1}{^l}\bs{r}_k$ and ${^r}\underline{\bs{r}}_j=\sum_{k=0}^j {^r}\bar{\bs{R}}_{k,j}^{-1}{^r}\bs{r}_k$. The algebraic observation of linear observed systems on multi-frame groups can be classified into the forms of left-action case \eqref{eq::obs_mfg_component1}--\eqref{eq::obs_mfg_component3}
  \begin{align}
    \label{eq::obs_mfg_component1}
    &\bar{\bs{w}}_0=\left(\mathcal{C}_{\bar{\bs{\Omega}}_0}\bs{R}_0\right)\bar{\bs{d}}_{0}^{(1)}+\left(\bar{\bs{\Omega}}_0\bs{r}_0+\bar{\bs{\varrho}}_0\right)\bar{\bs{d}}_0^{(2)}\\
    \label{eq::obs_mfg_component2}
    &{^l}\bar{\bs{w}}_j=\left(\mathcal{C}_{{^l}\bar{\bs{\Omega}}_j}{^l}\bar{\bs{R}}_{j,0}\right){^l}\bar{\bs{d}}_{j}^{(1)}+\left({^l}\bar{\bs{\varrho}}_j+{^l}\bar{\bs{\Omega}}_j{^l}\bar{\bs{r}}_j\right){^l}\bar{\bs{d}}_j^{(2)}\\
    \label{eq::obs_mfg_component3}
    &{^r}\bar{\bs{w}}_j=\left(\mathcal{C}_{{^r}\bar{\bs{\Omega}}_j}{^r}\bar{\bs{R}}_{0,j}\right){^r}\bar{\bs{d}}_{j}^{(1)}+\left({^r}\bar{\bs{\varrho}}_j+{^r}\bar{\bs{\Omega}}_j{^r}\bar{\bs{r}}_j\right){^r}\bar{\bs{d}}_j^{(2)}
  \end{align}
  and right-action case \eqref{eq::obs_mfg_component4}--\eqref{eq::obs_mfg_component6}.
  \begin{align}
    \label{eq::obs_mfg_component4}
    &\underline{\bs{w}}_0=\left(\mathcal{C}_{\underline{\bs{\Omega}}_0}\bs{R}^{-1}_0\right)\underline{\bs{d}}_0^{(1)}+\left(\underline{\bs{\varrho}}_0-\underline{\bs{\Omega}}_0\bs{R}^{-1}_0\bs{r}_0\right)\underline{\bs{d}}_0^{(2)}\\
    \label{eq::obs_mfg_component5}
    &{^l}\underline{\bs{w}}_j=\left(\mathcal{C}_{{^l}\underline{\bs{\Omega}}_j}{^l}\bar{\bs{R}}_{j,0}^{-1}\right){^l}\underline{\bs{d}}_j^{(1)}+\left({^l}\underline{\bs{\varrho}}_j-{^l}\underline{\bs{\Omega}}_j{^l}\underline{\bs{r}}_j\right){^l}\underline{\bs{d}}_j^{(2)}\\
    \label{eq::obs_mfg_component6}
    &{^r}\underline{\bs{w}}_j=\left(\mathcal{C}_{{^r}\underline{\bs{\Omega}}_j}{^r}\bar{\bs{R}}_{0,j}^{-1}\right){^r}\underline{\bs{d}}_j^{(1)}+\left({^r}\underline{\bs{\varrho}}_j-{^r}\underline{\bs{\Omega}}_j {^r}\underline{\bs{r}}_j\right){^r}\underline{\bs{d}}_j^{(2)}
  \end{align}
\end{theorem}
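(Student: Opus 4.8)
The plan is to start directly from the automorphism-twisted measurement equations \eqref{eq::obs_matrix_form1}--\eqref{eq::obs_matrix_form2}, which by Lemma~\ref{lm::group_action} together with Theorem~\ref{th::mfg_automorphism} already enumerate every linear observed measurement built from a single diagonal TFG block $\tilde{\bs{\chi}}$ acting on a constant vector. First I would carry out the conjugation $\bs{S}\tilde{\bs{\chi}}^{\pm 1}\bs{S}^{-1}$ in block form. With $\bs{S}=\bigl[\begin{smallmatrix}\bar{\bs{\Omega}} & \bar{\bs{\varrho}}\\ \bs{0} & \bar{\bs{A}}\end{smallmatrix}\bigr]$ and $\tilde{\bs{\chi}}=\bigl[\begin{smallmatrix}\tilde{\bs{R}} & \tilde{\bs{r}}\\ \bs{0} & \bs{I}\end{smallmatrix}\bigr]$, the left-action case produces a TFG-shaped matrix whose $(1,1)$ block is $\mathcal{C}_{\bar{\bs{\Omega}}}\tilde{\bs{R}}$ and whose $(1,2)$ block is $\bigl[(\bs{I}-\mathcal{C}_{\bar{\bs{\Omega}}}\tilde{\bs{R}})\bar{\bs{\varrho}}+\bar{\bs{\Omega}}\tilde{\bs{r}}\bigr]\bar{\bs{A}}^{-1}$; the right-action case is the same computation with $\tilde{\bs{R}}\mapsto\tilde{\bs{R}}^{-1}$ and $\tilde{\bs{r}}\mapsto-\tilde{\bs{R}}^{-1}\tilde{\bs{r}}$. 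Extracting only the leading $d$-dimensional row block, as announced in the text, isolates the observed $\mathbb{R}^d$ vector.

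The second step is the reparametrization that produces the clean classified forms, and this is where the actual classification content lives. Since $\bar{\bs{A}}$ ranges over all invertible matrices and $\bar{\bs{d}}^{(2)}$ over all of $\mathbb{R}^{n+m}$, the product $\bar{\bs{A}}^{-1}\bar{\bs{d}}^{(2)}$ again covers $\mathbb{R}^{n+m}$, so $\bar{\bs{A}}$ is absorbed into a redefined $\bar{\bs{d}}^{(2)}$. The leftover term $-(\mathcal{C}_{\bar{\bs{\Omega}}}\tilde{\bs{R}})\bar{\bs{\varrho}}\bar{\bs{d}}^{(2)}$ carries the same coefficient matrix $\mathcal{C}_{\bar{\bs{\Omega}}}\tilde{\bs{R}}$ as the $\bar{\bs{d}}^{(1)}$ term, hence folds into a redefined $\bar{\bs{d}}^{(1)}$; because $\bar{\bs{d}}^{(1)}$ is free in $\mathbb{R}^d$, the shift $\bar{\bs{d}}^{(1)}\mapsto\bar{\bs{d}}^{(1)}-\bar{\bs{\varrho}}\bar{\bs{d}}^{(2)}$ is a bijection and forfeits no generality. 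After these absorptions the left-action measurement collapses to $(\mathcal{C}_{\bar{\bs{\Omega}}}\tilde{\bs{R}})\bar{\bs{d}}^{(1)}+(\bar{\bs{\varrho}}+\bar{\bs{\Omega}}\tilde{\bs{r}})\bar{\bs{d}}^{(2)}$, and the right-action analogue to $(\mathcal{C}_{\underline{\bs{\Omega}}}\tilde{\bs{R}}^{-1})\underline{\bs{d}}^{(1)}+(\underline{\bs{\varrho}}-\underline{\bs{\Omega}}\tilde{\bs{R}}^{-1}\tilde{\bs{r}})\underline{\bs{d}}^{(2)}$.

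The final step instantiates $\tilde{\bs{\chi}}$ as each of the three diagonal block types $\bs{T}_0$, $[{^l}\bs{T}_j\cdots\bs{T}_0]$ and $[\bs{T}_0\cdots{^r}\bs{T}_j]$ and reads off $\tilde{\bs{R}},\tilde{\bs{r}}$. For the core block this is immediate and yields \eqref{eq::obs_mfg_component1} and \eqref{eq::obs_mfg_component4}. For the chains, multiplying out the TFG factors shows the rotation part is exactly the successive product ${^l}\bar{\bs{R}}_{j,0}$ or ${^r}\bar{\bs{R}}_{0,j}$ of \eqref{eq::successive_multiplication}, while the translation part telescopes into ${^l}\bar{\bs{r}}_j$ or ${^r}\bar{\bs{r}}_j$, recovering \eqref{eq::obs_mfg_component2}--\eqref{eq::obs_mfg_component3}. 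For the right-action chains I would additionally verify $\tilde{\bs{R}}^{-1}\tilde{\bs{r}}={^l}\underline{\bs{r}}_j$ and $\tilde{\bs{R}}^{-1}\tilde{\bs{r}}={^r}\underline{\bs{r}}_j$ respectively; these rest on a telescoping cancellation of the form ${^l}\bar{\bs{R}}_{j,0}^{-1}{^l}\bar{\bs{R}}_{j,k+1}={^l}\bar{\bs{R}}_{k,0}^{-1}$ among the nested rotation products, which is the step where the index bookkeeping is most delicate and which I expect to be the main obstacle. Collecting the six specializations then completes the classification.
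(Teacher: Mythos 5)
Your proposal is correct and takes essentially the same route as the paper, which likewise combines \eqref{eq::obs_total} with Lemma~\ref{lm::group_action} to reach \eqref{eq::obs_matrix_form1}--\eqref{eq::obs_matrix_form2}, equates the first row blocks, and then instantiates $\tilde{\bs{\chi}}$ as each of the three diagonal block types. Your explicit block computation, the absorption of $\bar{\bs{A}}$ and of the $\bar{\bs{\varrho}}$ cross-term into redefined constants $\bar{\bs{d}}^{(1)},\bar{\bs{d}}^{(2)}$, and the telescoping identities ${^l}\bar{\bs{R}}_{j,0}^{-1}{^l}\bar{\bs{R}}_{j,k+1}={^l}\bar{\bs{R}}_{k,0}^{-1}$ and ${^r}\bar{\bs{R}}_{0,j}^{-1}{^r}\bar{\bs{R}}_{0,k-1}={^r}\bar{\bs{R}}_{k,j}^{-1}$ all check out; they simply make explicit the ``simplifying'' step the paper leaves implicit.
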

\begin{remark}
  Similarly, we could break the state into column blocks, e.g. ${^{(\cdot)}}\bs{r}_j=\left[{^{(\cdot)}}\bs{x}_j,\ {^{(\cdot)}}\bs{R}_j{^{(\cdot)}}\bs{y}_j\right]$ to reproduce `natural two-frame outputs'\cite{TFG} from \eqref{eq::obs_mfg_component1} and \eqref{eq::obs_mfg_component4}.
\end{remark}

\section{Application in Navigation}\label{sec::application}

Depth-camera inertial odometry with simultaneous estimation of camera extrinsics involving three frames is proposed to show the application of MFG. Let $\bs{R}\in\SO{3}$ be the rotation from the IMU frame to the world frame. $\bs{p},\bs{v}\in\mathbb{R}^3$ denote IMU position and velocity. Let $(\bs{R}_c,\bs{p}_c)\in\text{SE}(3)$ be the extrinsic parameter from the depth camera frame to the IMU frame. The IMU and camera are both rigidly attached to a moving vehicle. We estimate all the above states simultaneously. The process ODE with noise are \eqref{eq::sim_process_ode1}--\eqref{eq::sim_process_ode2},
\begin{align}
  \label{eq::sim_process_ode1}
  &\dot{\bs{R}}=\bs{R}(\bs{\omega}+\bs{n}_g)^\times,\ \dot{\bs{R}}_c=\bs{R}_c\bs{n}_c^\times\\
  \label{eq::sim_process_ode2}
  &\dot{\bs{p}}=\bs{v},\ \dot{\bs{v}}=\bs{R}(\bs{a}+\bs{n}_a)+\bs{g},\ \dot{\bs{p}}_c=\bs{n}_p
\end{align}
where $\bs{\omega},\bs{a}\in\mathbb{R}^3$ are gyroscope and accelerometer inputs. $\bs{n}_g,\bs{n}_a,\bs{n}_c,\bs{n}_p$ are $\mathbb{R}^3$-valued zero-mean Gaussian noise. Note we also associate noise to the extrinsic parameters. Our depth-camera measures the 3D position $\bs{z}_j\in\mathbb{R}^3(j=1,2,3)$ of three landmarks with respect to the camera frame with known global position $\bs{p}_{l,j}$. The observation equation with zero-mean Gaussian noise $\bs{n}_{z,j}\in\mathbb{R}^3$ writes \eqref{eq::sim_algebraic_obs}.
\begin{equation}
  \label{eq::sim_algebraic_obs}
  \bs{z}_j=\bs{R}_c^{-1}\left[\bs{R}^{-1}(\bs{p}_{l,j}-\bs{p})-\bs{p}_c\right]+\bs{n}_{z,j},\ j=1,2,3
\end{equation}

Non-biased IMU dynamics is considered here, though biases can be involved in TFG structure\cite{TFG}. More advanced geometrical treatments on biases are thoroughly explored in \cite{af_thesis}. \cite{att_online_ext} tackles online extrinsic calibration via semi-direct product of attitude $\SO{d}$ only. Traditionally, we may endow $\SO{3}\times\mathbb{R}^6\times\SO{3}\times\mathbb{R}^3$ or $\text{SE}_2(3)\times\text{SE}(3)$ on the state $(\bs{R},\bs{p},\bs{v},\bs{R}_c,\bs{p}_c)$, inducing a multiplicative EKF (MEKF) or an imperfect IEKF respectively. There is no coupling between the core IMU pose and the camera extrinsics due to the direct-product structure in the state. One may notice that \eqref{eq::sim_algebraic_obs} without noise can be cast into the form of \eqref{eq::obs_mfg_component6} in Theorem~\ref{th::los_classification_obs} with $j=1$, and thus $\MFG(3,2,0,0,1)$ can be endowed upon the state making \eqref{eq::sim_algebraic_obs} linear observed. Meanwhile, the process ODEs \eqref{eq::sim_process_ode1}--\eqref{eq::sim_process_ode2} are no longer group affine by Theorem~\ref{th::los_classification_dynamics}. This is not an obstruction because the $\MFG(3,2,0,0,1)$ construction reduces linearization error in \eqref{eq::sim_algebraic_obs} and transfers the nonlinearities in \eqref{eq::sim_algebraic_obs} to slow dynamics $\dot{\bs{R}}_c=\bs{R}_c\bs{n}_c^\times$ and $\dot{\bs{p}}_c=\bs{n}_p$, similar to the philosophies in tackling biases\cite{af_thesis}. Let the true state $\bs{\chi}=(\bs{R},\bs{p},\bs{v},\bs{R}_c,\bs{p}_c)$ be $\MFG(3,2,0,0,1)$. Let the error-state be $\delta\bs{\chi}=[\delta\bs{\theta};\delta\bs{p};\delta\bs{v};\delta\bs{\theta}_c;\delta\bs{p}_c]\in\mathbb{R}^{15}$ where each vector block is $\mathbb{R}^3$. $\hat{\bs{\chi}}\in\MFG(3,2,0,0,1)$ denotes the estimated state. The MFG nonlinear update is then realized from \eqref{eq::sim_nonlinear_update1}--\eqref{eq::sim_nonlinear_update3}.
\begin{align}
  \label{eq::sim_nonlinear_update1}
  &\bs{R}=\hat{\bs{R}}\Gamma_0(\delta\bs{\theta}),\ \bs{R}_c=\Gamma_0(\hat{\bs{R}}^{-1}\delta\bs{\theta}_c)\hat{\bs{R}}_c\\
  \label{eq::sim_nonlinear_update2}
  &\bs{p}=\Gamma_0(\delta\bs{\theta})\hat{\bs{p}}+\Gamma_1(\delta\bs{\theta})\delta\bs{p},\bs{v}=\Gamma_0(\delta\bs{\theta})\hat{\bs{v}}+\Gamma_1(\delta\bs{\theta})\delta\bs{v}\\
  \label{eq::sim_nonlinear_update3}
  &\bs{p}_c=\Gamma_0(\hat{\bs{R}}^{-1}\delta\bs{\theta}_c)\hat{\bs{p}}_c+\hat{\bs{R}}^{-1}\Delta(\delta\bs{\theta}_c,\hat{\bs{p}},\delta\bs{p}_c)
\end{align}
where the auxiliary functions are $\Gamma_m(\delta\bs{\theta})=\sum_{n=0}^{+\infty}\frac{(\delta\bs{\theta})^\times}{(n+m)!}$ and $\Delta(\delta\bs{\theta}_c,\hat{\bs{p}},\delta\bs{p}_c)=(\Gamma_0(\delta\bs{\theta}_c)-\bs{I})\hat{\bs{p}}+\Gamma_1(\delta\bs{\theta}_c)\delta\bs{p}_c$. Update formulas for MEKF and IEKF (imperfect) are omitted.

\begin{figure}[t]
  \centering
  \parbox{3.1in}{
    \centering
    \includegraphics[scale=0.70]{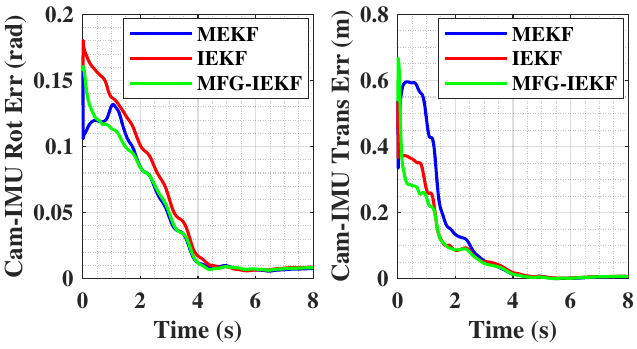}
   }
   \caption{Plot of error w.r.t time of MEKF, imperfect IEKF (IEKF) and our proposed MFG-IEKF. Only extrinsics (Camera to IMU) states are demonstrated. `Rot Err' and `Trans Err' are short for rotational error $\Vert\log(\hat{\bs{R}}_c\bs{R}_c^{-1})\Vert$ and translational error $\Vert\hat{\bs{p}}_c-\bs{p}_c\Vert$.}
   \label{fig::error_plot}
\end{figure}

Simulation to compare MEKF, imperfect IEKF and our proposed MFG-IEKF is shown in Fig.~\ref{fig::error_plot} treating those filters as deterministic observers, i.e. noise-free propagation and observation with all covariances of $\bs{n}_g,\bs{n}_c,\bs{n}_a,\bs{n}_p,\bs{n}_{z,j}$ as tunable gains\cite{InEKF}. The state trajectory are preset analytical functions and landmarks are at constant known positions. MEKF, IEKF and MFG-IEKF share the same tunable parameters (covariances) and initial states with deliberate offsets on camera-IMU extrinsics $(\bs{R}_c,\bs{p}_c)$. It's not surprising to see our MFG-IEKF has better transient performance due to more exact linearization as explained above. Meanwhile, all filters have similar steady-state behaviors. Afterall, all the above filters can be realized under a unified equivariant filter framework with different linearization approaches induced from different symmetries\cite{af_thesis}. 

\section{Conclusion}\label{sec::conclusion}

We construct multi-frame group for simultaneous navigational estimation involving multiple frames by semi-direct product covering natural extensions. All linear observed systems including group affine ODEs and algebraic observations are classified via the automorphism group of MFG. An example is provided to show the advantage of MFG-IEKF.
 










\bibliographystyle{IEEEtran.bst}
\bibliography{refs.bib}

\end{document}